\theoremstyle{definition}
\newtheorem{definition}{Definition}
\theoremstyle{plain}
\newtheorem{proposition}{Proposition}
\newtheorem{corollary}{Corollary}
\newtheorem{theorem}{Theorem}  
\def\BibTeX{{\rm B\kern-.05em{\sc i\kern-.025em b}\kern-.08em
    T\kern-.1667em\lower.7ex\hbox{E}\kern-.125emX}}
\begin{document}

\title{Privacy-Preserving AI-Enabled Decentralized
Learning and Employment Records System\\}

\author{\IEEEauthorblockN{1\textsuperscript{st} Yuqiao Xu}
\IEEEauthorblockA{\textit{Dept. Computer and Data Sciences} \\
\textit{Case Western Reserve University}\\
Cleveland, United States \\
0009-0009-3552-2136}
\and
\IEEEauthorblockN{2\textsuperscript{nd} Mina Namazi}
\IEEEauthorblockA{\textit{Dept. Computer and Data Sciences} \\
\textit{Case Western Reserve University}\\
Cleveland, United States \\
0000-0002-8878-9362}
\and
\IEEEauthorblockN{3\textsuperscript{rd} Sahith Reddy Jalapally}
\IEEEauthorblockA{\textit{Dept. Computer and Data Sciences} \\
\textit{Case Western Reserve University}\\
Cleveland, United States \\
srj58@case.edu}
\and
\IEEEauthorblockN{4\textsuperscript{th} Osama Zafar}
\IEEEauthorblockA{\textit{Dept. Computer and Data Sciences} \\
\textit{Case Western Reserve University}\\
Cleveland, United States \\
0009-0008-9621-6899}
\and
\IEEEauthorblockN{5\textsuperscript{th} Youngjin Yoo}
\IEEEauthorblockA{\textit{Dept. Management} \\
\textit{The London School of Economics and Political Science}\\ London, United Kingdom \\
0000-0001-8548-3475}
\and
\IEEEauthorblockN{6\textsuperscript{th} Erman Ayday}
\IEEEauthorblockA{\textit{Dept. Computer and Data Sciences} \\
\textit{Case Western Reserve University}\\
Cleveland, United States \\
0000-0003-3383-1081}
}

\maketitle

\begin{abstract}
Learning and Employment Record (LER) systems are emerging as critical infrastructure for securely compiling and sharing educational and work achievements. Existing blockchain-based platforms leverage verifiable credentials but typically lack automated skill-credential generation and the ability to incorporate unstructured evidence of learning. In this paper, a privacy-preserving, AI-enabled decentralized LER system is proposed to address these gaps. Digitally signed transcripts from educational institutions are accepted, and verifiable self-issued skill credentials are derived inside a trusted execution environment (TEE) by a natural language processing pipeline that analyzes formal records (e.g., transcripts, syllabi) and informal artifacts. All verification and job–skill matching are performed inside the enclave with selective disclosure, so raw credentials and private keys remain enclave-confined. Job matching relies solely on attested skill vectors and is invariant to non-skill resume fields, thereby reducing opportunities for screening bias.

The NLP component was evaluated on sample learner data; the mapping follows the validated Syllabus-to-O*NET methodology, and a stability test across repeated runs observed $<\!5\%$ variance in top-ranked skills. Formal security statements and proof sketches are provided showing that derived credentials are unforgeable and that sensitive information remains confidential. The proposed system thus supports secure education and employment credentialing, robust transcript verification, and automated, privacy-preserving skill extraction within a decentralized framework.
\end{abstract}

\begin{IEEEkeywords}
Learning and Employment Records System, Verifiable Credentials, Natural Language Processing, Privacy-Preserving Framework
\end{IEEEkeywords}

\section{Introduction}
\label{intro}

The rapid digital transformation of education and employment has increased the demand for accurate, secure, and privacy-preserving credential verification. Digital credential platforms alone issued over 25 million verifiable credentials in 2023, reflecting a 26\% annual increase driven by the surge in online certifications and badges~\cite{accredible2023future}. Ensuring trustworthiness in credentials is thus essential: employers must reliably validate applicants' claimed skills, and individuals need portable credentials they can securely share.

However, traditional credential systems fall short of these requirements. Degrees and transcripts typically provide only coarse-grained information (e.g., course titles or a diploma) and do not capture the specific skills or competencies that learners actually acquired by learners. Informal achievements such as completions of massive open online courses(MOOCs), coding bootcamp certificates, or other self-issued micro-credentials are seldom acknowledged, leaving valuable skills undocumented and difficult to verify. Legacy verification processes require disclosure of entire transcripts or certificates, unnecessarily exposing unrelated personal information~\cite{opencreds_framework}. Such inefficiencies leave systems vulnerable to credential fraud. Indeed, studies indicate that over half of job applicants misrepresent their qualifications, and only about half of employers rigorously verify credentials~\cite{ams2024fraud,shrm2023skills}.

These challenges are compounded by the shift toward skill-based hiring, reportedly adopted by approximately 75\% of employers as of 2023~\cite{shrm2023skills}. Extracting and standardizing skill credentials remains difficult due to curriculum variability across institutions, which complicates direct comparison. For instance, a"Software Engineer" course at two universities might impart different content or depth of knowledge, making it hard to equate learning outcomes at face value. Initiatives like OpenCreds aim to present competencies more precisely ~\cite{opencreds_framework}, yet integration into secure, privacy-preserving frameworks remains limited.

Centralized credential platforms also introduce privacy and resilience risks by creating single points of failure, with high-profile breaches affecting larger user populations \cite{peetz2025powerschool}. Moreover, these platforms typically lack selective disclosure mechanisms, violating data-minimization principles by forcing users to disclose entire credential sets rather than only the attributes required for a given verification.

To address these challenges, we propose a decentralized LER system leveraging secure enclaves(TEEs) and a natural language processing(NLP) pipeline to enable privacy-preserving credential verification and skill matching. The user retains complete sovereignty over their credentials, including formal (degrees, certificates) and informal (microcredentials, self-attested skills) records, which are from online courses and BootCamp. Our architecture employs secure enclaves, trusted execution environments performing isolated computations, process credentials securely without exposing raw data externally. NLP methods within enclaves extract structured skill data from unstructured learning records. The system securely matches extracted credentials against job requirements, ensuring no centralized platform accesses complete learner histories or sensitive employer criteria. Crucially, our architecture supports selective disclosure, empowering users to share only relevant credentials securely.

In summary, this paper makes the following key contributions:
\begin{itemize}
 \item A lightweight secure enclave infrastructure enabling decentralized credential storage and secure computation for credential verification and matching.
 \item  An NLP pipeline, executed inside the enclave, that converts formal and supported informal learning records into verifiable, selectively disclosable skill credentials.
 \item A verifier-side matching mechanism that consumes only attested skill vectors; non-skill invariance is formalized as a design-level fairness property reducing opportunities for screening bias.
 \item A decentralized, user-centric architecture providing users full control over credential management and selective disclosure capabilities.
 \item A systems-focused evaluation reporting runtime characteristics and enclave resource profiles on sample learner data, with mapping aligned to validated Syllabus-to-O*NET methodology and a repeatability check; formal security statements and proof sketches addressing confidentiality and unforgeability.
 \end{itemize}

This paper is organized as follows. Section~\ref{related} reviews related work. Section~\ref{build} presents foundational building blocks. Section~\ref{setting} states the system model and threat assumptions. Section~\ref{proposed} details the framework and protocols. Section~\ref{analysis} provides security and privacy analysis. Section~\ref{evaluation} presents the evaluation (focused on a representative computer-science corpus; broader domains are deferred). Section~\ref{discussion} discusses implications and limitations, and Section~\ref{conc} concludes.

\section{Related Work}
\label{related}

Blockchain technology has emerged as a promising solution for LER systems, offering enhanced security, transparency, and efficiency~\cite{permana2024potential}. However, its adoption faces significant challenges such as scalability limitations, slow transaction speeds, and the Blockchain scaling trilemma, the difficulty of simultaneously achieving decentralization, security, and scalability~\cite{steiu2020blockchain,permana2024potential}. Educational institutions’ traditionally cautious adoption approaches further complicate implementation, driven by perceived risks and the relative immaturity of blockchain solutions~\cite{mohammad2022barriers}. Regulatory compliance with data protection laws (e.g., GDPR and the California Consumer Protection Act), financial constraints from high infrastructure and maintenance costs, and inherent privacy concerns due to blockchain’s immutable data storage collectively hinder widespread deployment~\cite{elkhodr2024systematic,haugsbakken2019blockchain,wang2025transforming,bhaskar2020blockchain}. Security and privacy concerns, especially around blockchain’s immutable data storage, compound these challenges, requiring institutions to carefully balance transparency and data protection \cite{wang2025transforming,bhaskar2020blockchain}.

Current blockchain-based credentialing frameworks, such as OpenCreds~\cite{opencreds_framework}
OpenWallet Foundation~\cite{open_wallet}, and the OpenWallet Foundation~\cite{open_wallet}, predominantly support issuer-based credentials and formal learning scenarios. They lack robust mechanisms for self-issued credentials or informal learning validation, limiting their applicability to diverse learning experiences beyond formal educational institutions~\cite{legicred_digital_transformation,dcc_framework}.

To address these gaps, our framework uniquely integrates decentralized verification via secure enclaves and advanced NLP-driven analytics. Unlike existing approaches, our system effectively handles both institutional and self-issued credentials, automatically extracting structured skills from formal and informal learning records and enabling selective disclosure. Leveraging secure enclaves ensures the confidentiality and integrity of credential data, substantially improving both privacy and usability. Table~\ref{tab:comparison} summarizes these distinctions between our proposed framework and existing LER solutions.

\begin{table*}[htbp]
    \centering
    \caption{Comparison of Credentialing and LER Frameworks}
    \resizebox{\textwidth}{!}{%
    \begin{tabular}{lcccc}
        \hline
        \textbf{Framework} & \textbf{Blockchain Anchored} & \textbf{AI Analytics} & \textbf{Informal Records} & \textbf{Self-Sovereign Wallet} \\
        \hline
        LegiCred \cite{legicred_digital_transformation} & Yes & Issuer & No & Partial \\
        OpenCreds (LRNGC) \cite{opencreds_framework} & Yes & Issuer & No & Partial \\
        Hyperledger Indy \cite{hyperledger_indy} & Yes & Both (SSI) & No & Yes \\
        Open Wallet Foundation \cite{open_wallet} & No & N/A & No & Yes \\
        DCC \cite{dcc_framework} & Yes & Issuer & No & Yes \\
        \hline
        \textbf{Our Approach} & No & Issuer or Self & \textbf{Yes} & \textbf{Yes} \\
        \hline
    \end{tabular}}

    \vspace{1ex}
    \begin{flushleft}
        \footnotesize{\textit{Note:} ``Blockchain Anchored'' indicates use of blockchain or distributed ledger for credential immutability. ``AI Analytics'' indicates the use of skill extraction or advanced analytics. ``Nontraditional Records'' refers to credentials from Massive Open Online Courses (MOOCs) \cite{mooc2025} or informal learning platforms. ``Self-Sovereign Wallet'' denotes support for user-controlled identity management and credential storage.}
    \end{flushleft}
    \label{tab:comparison}
\end{table*}

\noindent\textbf{External validation and domain coverage}
The syllabus--to--O*NET pipeline adopted here follows Course--Skill Atlas~\cite{javadian2024course}, developed and stress-tested at national scale (3.16\,M U.S.\ syllabi across 62 fields of study). That work documents cleaning and mapping transparently and reports macro validations,including aggressive filtering of non-pedagogical sentences, stability under sub-sampling, and auxiliary regressions linking detailed work activities(DWAs) to O*NET abilities, which collectively support face validity in lieu of gold labels. The methodology and its validation posture are adopted here.

\section{Background}
\label{build}

This section introduces the decentralized identity and credential primitives used in the framework and the skills taxonomy referenced by the NLP pipeline.

\subsection{Decentralized Identifiers (DIDs)}
\label{did}

Decentralized Identifiers (DIDs) are URIs that enable verifiable, self-sovereign identifiers without reliance on a centralized identity provider (IdP) or certificate authority (CA)~\cite{DIDs}. Each DID resolves, via its method, to a DID Document that specifies verification methods (public keys and supported proof mechanisms) and optional service endpoints. Control of a DID is proven by possession of the corresponding private key(s) referenced in the DID document; the DID itself is not an IdP or CA.

For our purposes it is sufficient to expose the following abstract interfaces:
\[
\begin{aligned}
\mathsf{GenDID}(pk_c,\mathit{mtd},\mathit{meta}) &\rightarrow \mathit{did},\\
\mathsf{Resolve}(\mathit{did}) &\rightarrow \mathit{doc},\\
\mathsf{ProveCtrl}(sk_c,\mathit{challenge}) &\rightarrow \sigma,
\end{aligned}
\]
where $pk_c/sk_c$ denote the controller key pair, $\mathit{mtd}$ the DID method, and $\mathit{doc}$ the resolved DID document containing verification methods and metadata. $\mathsf{ProveCtrl}$ produces a proof of control $\sigma$ (e.g., a signature over a verifier challenge).

\subsection{Verifiable Credentials (VCs): Institutional and Self-Issued}
\label{vrfcred}

Verifiable Credentials (VCs) are tamper-evident, digitally signed claims issued about a subject~\cite{vc_standard}. Selective-disclosure presentations allow holders to reveal only the attributes required by a verifier. We model issuance, presentation, verification, and status as:
\[
\begin{aligned}
\mathsf{Issue}(sk_I,\mathit{claims}) &\rightarrow VC,\\
\mathsf{Present}(VC,\mathsf{pol},sk_H) &\rightarrow \mathit{VP},\\
\mathsf{Verify}(vk_I,\mathit{VP}) &\rightarrow \{0,1\},\\
\mathsf{Status}(\mathit{VCID}) &\rightarrow \mathit{valid/revoked}.
\end{aligned}
\]
Here $sk_I/vk_I$ are issuer keys, $sk_H$ is the holder key, $\mathsf{pol}$ is a disclosure policy, and $\mathit{VP}$ is a holder-signed verifiable presentation.

Two credential classes are used. \textbf{Institution-issued} VCs (e.g., transcripts, certificates) carry signatures from authoritative issuers such as universities, providing high assurance. \textbf{Self-issued} VCs capture informal learning and are signed by the holder; in the proposed system their derivation is additionally attested by the TEE, strengthening verifiability without disclosing raw artifacts.

\begin{figure}[htbp]
    \centering
    \includegraphics[width=1\linewidth]{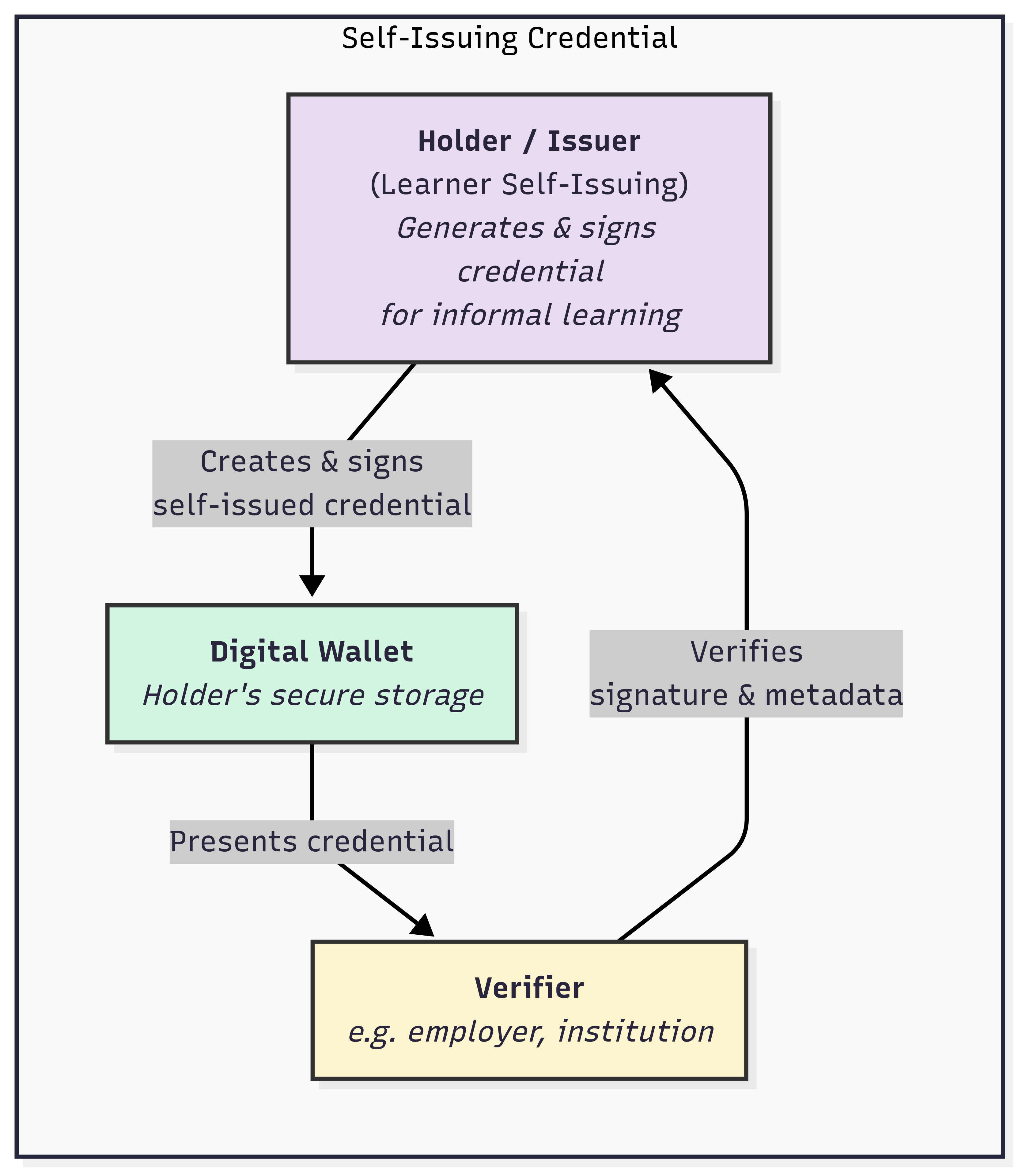}
    \caption{Self-issued credential flow. A holder derives and signs a credential for informal learning; the derivation is attested by the enclave and the credential is stored in the wallet for selective-disclosure presentations.}
    \label{fig:self-issued-flow}
\end{figure}

\subsection{NLP-Based Skill Extraction with O*NET}
\label{sec:nlp-onet}

To obtain standardized and interpretable skill evidence, the NLP pipeline aligns unstructured educational records with the O*NET taxonomy maintained by the U.S. Department of Labor~\cite{onet2025taxonomy}. O*NET provides: (i) \emph{Detailed Work Activities (DWAs)}, fine-grained task descriptors (e.g., ``Develop scientific or mathematical models''); (ii) \emph{Task statements}, higher-level responsibilities composed of multiple DWAs; and (iii) \emph{Abilities}, innate or acquired capacities (e.g., \emph{Deductive Reasoning}) relevant across occupations.

Following the Course--Skill Atlas methodology, sentence embeddings are computed for learning records and compared to O*NET descriptors. Each skill $s$ receives a score given by the maximum cosine similarity between $s$ and any sentence in the document:
\begin{equation}
\mathrm{Score}(s) \;=\; \max_{j}\, \operatorname{cos}\!\bigl(\vec{v}_{\mathrm{sent}_j},\, \vec{v}_{s}\bigr),
\end{equation}
where $\vec{v}_{\mathrm{sent}_j}$ is the embedding of the $j$-th sentence and $\vec{v}_s$ is the embedding of the skill descriptor. The resulting skill vector underlies the derivative skill credential, aligning learner records with labor-market terminology and enabling privacy-preserving matching against job requirements.

\section{System Setting}
\label{setting}

The proposed system enables privacy-preserving, AI-supported management of LER through a decentralized architecture. Verifiable credentials, trusted execution environments (secure enclaves), and an NLP pipeline are integrated to validate formal and informal learning, derive structured skill evidence, and perform job-skill matching under holder control.  Figure~\ref{fig:credential_framework} illustrates the architecture, highlighting interactions among key actors and components.

\begin{figure*}[!t]
  \centering
\includegraphics[width=\textwidth,height=0.80\textheight,keepaspectratio]{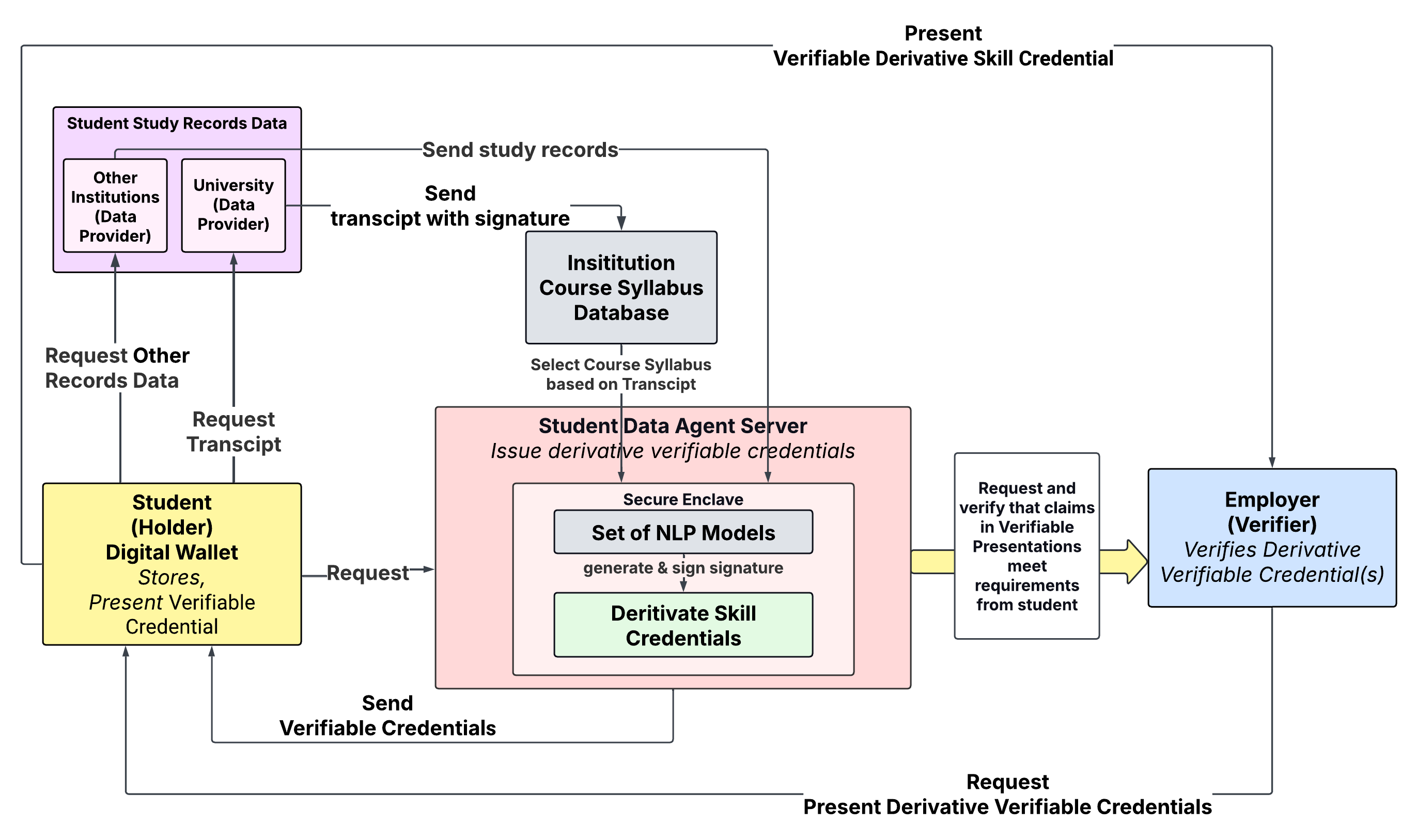}
  \caption{System architecture of the decentralized LER. Institution-issued credentials (e.g., transcripts) are delivered to the holder’s digital wallet and used to select relevant syllabi from the institutional database; optional study records from other providers may also be ingested. Within the Student Data Agent server, a secure enclave executes the NLP pipeline to derive and sign skill credentials, which are stored alongside issuer credentials in the wallet. Upon verifier request, derivative credentials are assembled into verifiable presentations and checked against stated requirements, while raw records remain undisclosed.}
  \label{fig:credential_framework}
\end{figure*}

\subsection{Entity Roles}
Three principal roles participate in issuance, control, and verification of credentials:
\begin{itemize} \item \textbf{Issuer} ($\mathcal{I}$): An authoritative entity that signs and attests to credentials (e.g., universities, certification bodies). Institutional credentials originate from authoritative sources (e.g., transcripts, certificates). Self-issued credentials for informal learning are also supported; their derivations are attested by the enclave to strengthen verifiability while preserving confidentiality (see Fig.~\ref{fig:credential_framework}).

\item \textbf{Holder} ($\mathcal{H}$): The learner who stores and manages credentials in a digital wallet, exercises selective disclosure according to verifier requests and policy, and authorizes presentations. Both institutional and self-issued credentials are controlled by the holder.

\item \textbf{Verifier} ($\mathcal{V}$): A relying party (e.g., employer, institution) that requests evidence and validates it. Verification checks issuer signatures for institutional credentials and enclave attestation and provenance bindings for derivative/self-issued skill credentials, without requiring access to raw records.
\end{itemize}

This delineation of roles ensures that credentials remain verifiable, holder-controlled, and privately validated: issuance is authenticated by \(\mathcal{I}\), storage and disclosure are governed by \(\mathcal{H}\), and acceptance rules are enforced by the verifier ($\mathcal{V}$) via signature and status checks, freshness/nonce validation, an allowlist of enclave measurements, equality of policy and matching thresholds.

\subsection{Threat Model}
Protected assets comprise raw educational records $D$ (e.g., transcripts, syllabi), derived skill vectors $\mathbf{v}_H$, issued verifiable credentials $VC$, the holder’s private keys $sk_H$, and enclave code and data. Trust anchors consist of issuer public keys $pk_I$ used to validate institutional credentials and the TEE’s remote-attestation keys. Objectives are to preserve the confidentiality of $D$ and $sk_H$, to ensure the integrity and authenticity of issued $VC$ and matching outputs, and to prevent linkage across multiple presentations beyond attributes intentionally disclosed by the holder.

Adversarial capabilities include a network-level attacker able to eavesdrop, replay, reorder, inject, or drop messages, and a host-level adversary with control over the operating system, hypervisor, and cloud host. In this setting, non-enclave memory, storage, scheduling, and I/O may be inspected or modified, and denial of service is possible. Verifiers are modeled as honest-but-curious: protocol steps are followed, yet adaptive queries may be issued to infer additional information or to correlate multiple sessions. Wallet and storage operators are likewise treated as honest-but-curious, with access to metadata but not to enclave-sealed plaintext. A malicious holder may submit arbitrary evidence in an attempt to obtain false self-issued credentials; integrity is constrained by enclave attestation and by issuer signatures for institutional credentials.

Standard cryptographic primitives and TEE isolation with correct remote attestation are assumed secure. Physical attacks and microarchitectural side channels against TEEs are considered out of scope; commonplace mitigations (e.g., constant-time operations and minimal state exposure) are applied but not modeled. Issuers are trusted to sign correct credentials and to manage keys and revocation; systemic mis-issuance or compromise of the issuer public-key infrastructure is not addressed. Upstream institutional records (transcripts and syllabi) are assumed authentic at the source; poisoning of those sources lies outside the present scope. Holder devices are assumed uncompromised at the time of use, with private keys stored in hardware-backed keystores. Denial-of-service resilience and full traffic-analysis resistance are not primary goals of this design.

Under these assumptions, the security objectives are as follows. Confidentiality requires that raw inputs $D$ and $sk_H$ remain confined to the enclave and that verifiers learn only attributes or predicates explicitly disclosed by the holder. Integrity and authenticity require that derived credentials and matching outputs be bound to attested enclave code and to specific inputs via digital signatures and attested transcripts. Unlinkability requires that repeated presentations be unlinkable except through attributes deliberately revealed or holder-chosen identifiers. Verifiability and revocation require that verifiers be able to check issuer signatures, TEE attestation evidence, and credential status; policy and mechanisms for revocation are enabled by design but are not analyzed in detail here.


\section{Proposed Framework}
\label{proposed}

This section describes our privacy-preserving, AI-enabled decentralized LER framework, detailing its high-level architecture, entity roles, and comprehensive workflow for credential issuance and verification.

\subsection{System Architecture Overview}
Our proposed LER framework combines verifiable credentials, secure enclaves, and NLP techniques to facilitate secure, privacy-preserving verification of learning records, automated extraction of skills, and efficient job-skill matching. Users retain complete control over their formal and informal credentials, selectively sharing only essential information with verifiers to safeguard privacy and reduce potential bias. 

Employers benefit from streamlined credential validation, significantly reducing verification time, mitigating fraud risks, and efficiently screening candidates based on verified skills. The lightweight and decentralized architecture of the system, illustrated in Figure~\ref{fig:credential_framework}, ensures secure interactions between users, institutions, and employers, effectively protecting the confidentiality, integrity, and individual autonomy of data throughout the lifecycle of credential management.

\subsection{System Workflow with Credential Issuance \& Verification}
The end-to-end process proceeds in six steps. Steps~1--3 cover credential issuance; Steps~4--6 cover presentation, verification, and matching.

\subsubsection{Step 1: DID setup and institutional issuance}
A decentralized identifier for the holder is generated as $\mathrm{DID}_H \leftarrow \mathsf{GenDID}(pk_H,\mathit{mtd},\mathit{meta})$. An authorized issuer $\mathcal{I}$ produces an institutional credential over record $D$ with signature $\sigma_D \leftarrow \mathsf{Sign}(sk_I,D)$ and delivers $(D,\sigma_D)$ to the holder’s wallet.

\subsubsection{Step 2: Evidence collation and enclave derivation (self-issued)}
Supported informal evidence is supplied to a holder-controlled enclave (TEE). Inside the enclave, an NLP pipeline analyzes the evidence and produces a structured skill vector $\hat v$. A self-issued skill credential $VC_{\text{skill}}$ denoted $VC$ is created that embeds $\hat v$ and provenance fields; the enclave outputs $VC$ together with a hardware attestation proving that derivation occurred inside a measured enclave.

\subsubsection{Step 3: Wallet storage and selective disclosure control}
Institutional $(D,\sigma_D)$ and self-issued $VC$ are stored in the holder’s wallet. A disclosure policy is configured to govern which attributes or predicates may be revealed in future presentations.

\subsubsection{Step 4: Presentation (holder)}

In response to a verifier request, a verifiable presentation is constructed; only attributes permitted by the disclosure policy are included. Let
\begin{equation*}
  \mathit{VP} \gets \mathsf{Present}(VC,\mathsf{pol},sk_H),
\end{equation*}
optionally with stapled revocation status. Private keys and raw records remain confined to the enclave.

\subsubsection{Step 5: Verification (verifier enclave)}
The verifier $\mathcal{V}$ validates the presentation inside a verifier-side enclave. Institutional credentials are checked via issuer signatures and status. For self-issued credentials, the enclave verifies the hardware attestation (measurement, input commitments, freshness) and checks provenance consistency; the holder signature on $\mathit{VP}$ and revocation status within a freshness window $\Delta$ are enforced.

\subsubsection{Step 6: Skills-only matching (verifier enclave)}
Matching is performed over attested skill vectors only. Let $s=f(\hat v)$ and $\hat Y=\mathbb{1}\{s\ge\tau\}$. Non-skill resume  fields are not inputs, rendering outcomes invariant to those fields. Because matching executes inside the enclave, job criteria and holder credentials remain confidential; only the decision or a minimal score is released according to policy.

The workflow yields verifiable, selectively disclosable credentials and decisions whose provenance is bound to measured code and specific inputs. Raw records and private keys never leave the enclave, and repeated presentations remain unlinkable beyond intentional disclosure.

\subsection{System Components}
\label{component}

The framework comprises three components: a digital wallet, an NLP processing pipeline, and secure enclaves. Together they enable secure credential storage, automated skill extraction, and privacy-preserving verification.

\subsubsection{Digital Wallet} The wallet stores and manages the holder’s verifiable credentials (institutional and self-issued), enforces key protection, and supports selective disclosure so that only required attributes are shared with verifiers.

\subsubsection{NLP Processing Pipeline}
\label{nlp}
An NLP pipeline extracts structured skill profiles from transcripts, syllabi, and supported informal artifacts. Following the Syllabus-to-O*NET approach of Course--Skill Atlas, the pipeline (i) filters non-pedagogical text, (ii) embeds sentences using the all-mpnet-base-v2 Sentence-BERT model \cite{reimers2019sentencebert,allmpnetbasev2}, (iii) maps course content to O*NET DWAs and task statements by semantic similarity, and (iv) personalizes scores using student-specific grade and course-level weights.

\noindent \textbf{Set-Weight Factors for Personalized Skill Extraction} We introduce a set-weight step integrating transcript data to produce individualized skill credentials. This step adapts skills scores based on the student's grade performance, course level, and other optional factors.

Each course $c$ yields a vector $\mathbf{v}_c \in \mathbb{R}^{m}$ over a target skill set $S=\{s_1,\ldots,s_m\}$ (e.g., O*NET DWAs). If a syllabus contains $n$ outcome sentences, then
\begin{equation}
\mathbf{v}_c = \bigl[\, \max_{1\le j\le n}\;\mathrm{sim}\!\bigl(\mathbf{v}_{\mathrm{sent}_j},\,\mathbf{v}_{s_i}\bigr) \,\bigr]_{i=1}^{m},
\end{equation}
where $\mathbf{v}_{\mathrm{sent}_j}$ is the embedding of sentence $j$ and $\mathbf{v}_{s_i}$ the embedding of skill $s_i$.
To obtain a student-specific skill vector, grade and level weights are applied:
\begin{equation}
\mathbf{v}_H \;=\; \sum_{c \in C_H} \mathbf{v}_c \cdot w_{\mathrm{grd}}(c) \cdot w_{\mathrm{lvl}}(c),
\end{equation}
where $C_H$ is the student’s completed-course set, $w_{\mathrm{grd}}(c)$ reflects performance in $c$, and $w_{\mathrm{lvl}}(c)$ reflects course rigor. The result is a personalized, standardized profile aligned to O*NET terminology.

\subsubsection{Secure Enclave (SE) Integration}

Trusted execution environments (TEEs) provide hardware-enforced isolation of code and data~\cite{costan2016sgx}. In the framework, sensitive operations, NLP skill extraction, provenance construction, and issuance of derivative skill credentials are executed entirely inside an enclave. Raw transcripts, intermediate NLP outputs, and private keys remain confined to enclave memory. Remote attestation allows verifiers to confirm that (i) inputs remained confidential, (ii) measured code executed untampered, and (iii) outputs are bound to that execution.

\begin{table}[t]
\caption{Attestation evidence bound to outputs.}
\label{tab:attestation}
\centering
\renewcommand{\arraystretch}{1.1}
\small
\begin{tabularx}{\linewidth}{l >{\raggedright\arraybackslash}X}
\toprule
\textbf{Field} & \textbf{Description} \\
\midrule
$M_e$            & Enclave measurement (code+model hash) \\
$H_{\text{inputs}}$ & Commitment to inputs (salted hashes of transcript and syllabus/LO) \\
$H_{\text{policy}}$ & Hash of disclosure/acceptance/matching configuration \\
$n_V$            & Verifier challenge nonce (freshness) \\
$t$              & Enclave timestamp / monotonic counter \\
$\sigma_{\text{TEE}}$ & Hardware attestation over all fields \\
$H_{\text{prov}}$ & Provenance digest embedded in $VC_{\text{skill}}$ \\
\bottomrule
\end{tabularx}
\end{table}

The verifier receives hardware attestation and a holder-signed presentation binding code, inputs, and policy to the credential. Let $M_e$ be the enclave measurement (hash of enclave binary and model bundle). The salted input commitment is
\[
H_{\text{inputs}} =
  H\!\big(
    \texttt{"inputs"} \,\|\, s \,\|\, H(\text{transcript}) \,\|\, H(\text{syllabus/LO})
  \big),
\]
where $H(\cdot)$ is a collision-resistant hash, $\|$ denotes concatenation, and $s$ is a per-presentation salt. Let $H_{\text{policy}} = H(\mathrm{canon}(\mathsf{pol}))$, with verifier nonce $n_V$ and timestamp $t$. The attestation and provenance are:
\[
\begin{aligned}
\sigma_{\text{TEE}} &=
  \mathrm{Attest}_{K_{\text{TEE}}}\big(
    \texttt{"att"} \,\|\, M_e \,\|\, H_{\text{inputs}} \,\|\, H_{\text{policy}} \,\|\, n_V \,\|\, t
  \big),\\
H_{\text{prov}} &=
  H\big(
    \texttt{"prov"} \,\|\, M_e \,\|\, H_{\text{inputs}} \,\|\, H_{\text{policy}} \,\|\, t
  \big).
\end{aligned}
\]
Verification passes if issuer/holder signatures and $\sigma_{\text{TEE}}$ verify, revocation is fresh (within $\Delta$), $M_e$ is allowlisted, and recomputed $H_{\text{prov}}$ and $H_{\text{policy}}$ match.

\subsection{Trusted Computing Base (TCB).}
The TCB comprises: (i) the enclave binary (normalizer, embedding and scoring, credential issuer, policy engine) and linked cryptographic libraries; (ii) model parameters packaged with the binary; (iii) the TEE hardware/firmware and attestation keys; and (iv) configuration bound via $H_{\text{policy}}$ and a trusted time/nonce source. The host OS/hypervisor, network, storage, wallet UI, and issuer resolver/adapters are outside the TCB. Remote attestation binds $M_e$, $H_{\text{inputs}}$, and $H_{\text{policy}}$ to the presentation, so outputs are accepted only from the measured TCB.

Issuer DIDs are resolved to syllabus/learning-outcome endpoints. Signed artifacts are fetched outside the enclave; signature verification and schema normalization are executed inside the enclave to minimize reliance on external components. Only hashes of artifacts ($H_{\text{inputs}}$) are retained; raw documents never leave the enclave.

\paragraph*{Provenance and Dispute Handling}
\raggedright
Each $VC_{\text{skill}}$ includes:
prov.inputs$=[H(\text{transcript}),\,H(\text{syllabus})]$,
prov.code$=M_e$,
prov.policy$=H_{\text{policy}}$,
prov.time$=t$, and \texttt{derivation\_id}.
\par
\justifying
If contested, a corrected credential is issued; the old one is revoked and its revoked status remains visible.

\subsection{Revocation and Freshness}
Institutional credentials follow issuer-hosted status lists. Derivative skill credentials are short-lived and are additionally referenced by a \texttt{DerivativeStatusList}. Verifiers require stapled status within a freshness window $\Delta$ and reject stale presentations. DID/key rotation is supported: new enclave measurements $M_e$ are associated with new issuance keys, and prior keys are marked revoked in status lists. Disputes trigger re-issue with corrected inputs; superseded credentials are revoked.

\section{Bias Reduction by Design}
This section formalizes that the pipeline reduces opportunities for bias at screening by enforcing skills-only matching. Let $v\in\mathbb{R}^{m}$ denote the attested skill vector produced within the enclave from signed records and normalized syllabi, and let $z$ collect non-skill r\'esum\'e fields (e.g., name, institution, dates). The verifier receives only disclosures derived from $v$ under selective disclosure; personal or demographic attributes are never inputs to matching.

\begin{definition}[Skill-only matcher]
A matcher is skill-only if the score is $s=f(v)$ and the decision is
$\hat{Y}=\mathbb{1}\{s\ge \tau\}$ for threshold $\tau$, with no dependence on $z$.
\end{definition}

\begin{definition}[Non-skill invariance]
A matcher is non-skill invariant if, for any fixed $v$ and any $z,z'$,
$s(v,z)=s(v,z')$ and $\hat{Y}(v,z)=\hat{Y}(v,z')$.
\end{definition}

\begin{proposition}[Attribute-blindness]
\label{prop:blind}
Every skill-only matcher is non-skill invariant.
\end{proposition}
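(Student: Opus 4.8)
The plan is to argue directly by unfolding the two definitions, since the proposition is essentially a tautology once the hypothesis is taken at face value. First I would fix an arbitrary attested skill vector $v\in\mathbb{R}^{m}$ and two arbitrary tuples of non-skill fields $z,z'$. By the definition of a skill-only matcher, the score is produced by a map $f$ whose only argument is $v$, and the threshold $\tau$ is a fixed constant that does not read the applicant record; hence $s(v,z)=f(v)=s(v,z')$, which is the first equality required by non-skill invariance.

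Second, I would push this equality through the decision rule. Since $\hat{Y}=\mathbb{1}\{s\ge\tau\}$ with $\tau$ fixed, substituting the equality of scores gives $\hat{Y}(v,z)=\mathbb{1}\{f(v)\ge\tau\}=\hat{Y}(v,z')$. As $v$, $z$, $z'$ were arbitrary, both conditions in the definition of non-skill invariance hold, which proves the claim. No case analysis or estimate is needed; the argument is a single substitution in each of the two clauses.

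Third, I would add one sentence isolating the sole modeling assumption that actually carries the proof: the entire content of the proposition lives in the hypothesis that $f$ has signature $v\mapsto s$ and that neither $f$ nor $\tau$ is permitted to consult $z$. In the system this is not an arithmetic fact but an operational one, enforced by the verifier enclave accepting only the attested vector $v$ (and policy/threshold parameters) as matching inputs, with $z$ never presented to the matcher. The only ``obstacle'' is therefore definitional hygiene rather than mathematics: one must state the matcher as a genuine function of its declared inputs, ruling out implicit dependence of $\tau$ on the applicant or side-channel leakage of $z$ into $f$, so that ``independent of $z$'' is a property of the function's signature rather than something to be derived. Once that is granted, the proposition is immediate, and I would keep the write-up to the two short substitution steps above.
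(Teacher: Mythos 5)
Your proof is correct and follows essentially the same route as the paper's: unfold the definition of a skill-only matcher to see that $s=f(v)$ cannot vary with $z$, then propagate this through the threshold rule to conclude $\hat{Y}$ is likewise invariant. Your added remark that the proposition's entire content resides in the hypothesis (and is enforced operationally by the enclave's input restriction) is a fair observation but does not change the argument.
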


\begin{proof}
By definition, $s=f(v)$ depends only on $v$; therefore, holding $v$ fixed while varying $z$ cannot change $s$ or $\hat{Y}$.
\end{proof}

\begin{definition}[Bias–Opportunity Index (BOI)]
\label{def:boi}
For a matcher $h(v,z)$, define
\[
\mathrm{BOI}(h) \;=\; \mathbb{E}_{v}\, \mathbb{E}_{z,z'}\!\big[(\,h(v,z)-h(v,z')\,)^{2}\big].
\]
The index measures sensitivity of decisions/scores to non-skill edits when skills are held fixed.
\end{definition}

\begin{proposition}[Reduced bias opportunity]
\label{prop:boi}
For any skill-only matcher $f(v)$, $\mathrm{BOI}(f)=0$.
For any r\'esum\'e-based matcher $g(v,z)$ that depends on $z$ on a set of non-zero measure,
$\mathrm{BOI}(g)>0$.
\end{proposition}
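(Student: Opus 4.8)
The plan is to prove the two assertions of Proposition~\ref{prop:boi} in turn, the first being an immediate consequence of Proposition~\ref{prop:blind} and the second a short measure-theoretic argument. For the first claim, let $f$ be any skill-only matcher. By Proposition~\ref{prop:blind}, $f$ is non-skill invariant, so for every fixed $v$ and all $z,z'$ we have $f(v,z)-f(v,z')=0$; hence the inner integrand in Definition~\ref{def:boi} vanishes identically, the double expectation over $z,z'$ is $0$ for each $v$, and therefore $\mathbb{E}_{v}[\,\cdot\,]=0$. This requires no regularity assumptions beyond measurability of $f$, which we take as given since $\mathrm{BOI}$ is defined.

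For the second claim, I would argue by contraposition in spirit: show that genuine dependence on $z$ forces a strictly positive index. Fix $v$ in the set where $g(v,\cdot)$ is non-constant in $z$; by hypothesis this set of $v$ has positive measure under the law of $v$. For such a $v$, there exist values $z_1,z_2$ with $g(v,z_1)\ne g(v,z_2)$, and because $g(v,\cdot)$ is non-constant on a set of non-zero measure in $z$, the inner expectation $\mathbb{E}_{z,z'}\big[(g(v,z)-g(v,z'))^{2}\big]$ is strictly positive — it is the expected squared difference of two i.i.d.\ copies of the non-degenerate random variable $g(v,\cdot)$, which equals twice its variance and is therefore $>0$. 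Integrating a nonnegative function that is strictly positive on a set of positive $v$-measure yields $\mathrm{BOI}(g)>0$.

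The main obstacle is making the phrase ``depends on $z$ on a set of non-zero measure'' precise enough to drive the argument, since the loose statement could be read either as ``for a positive-measure set of $v$, the map $z\mapsto g(v,z)$ is non-constant'' or as ``$g$ is non-constant on a positive-measure subset of the joint $(v,z)$ space.'' I would adopt the former reading (which is the natural one for a per-candidate matcher) and state it explicitly as the hypothesis, so that the Fubini-type interchange of the $v$- and $(z,z')$-integrals is justified and the positivity propagates. A secondary technical point is the identity $\mathbb{E}_{z,z'}[(X(z)-X(z'))^{2}]=2\operatorname{Var}_{z}(X)$ for i.i.d.\ $z,z'$; this is standard and I would invoke it without expansion, noting only that it is strictly positive exactly when $X=g(v,\cdot)$ is non-degenerate. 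No deeper difficulty arises: once the hypothesis is pinned down, both halves are one-line consequences.
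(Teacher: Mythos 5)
Your proposal is correct and follows essentially the same route as the paper: pointwise vanishing of the integrand for the skill-only case, and positivity of the expectation under genuine dependence on $z$ for the second case. Your treatment of the second half is in fact slightly more careful than the paper's one-line version — the paper asserts the squared difference is ``strictly positive on that set,'' which is not literally true pointwise (it vanishes whenever $z=z'$), whereas your reduction of the inner expectation to $2\operatorname{Var}_{z}\bigl(g(v,\cdot)\bigr)>0$ for non-degenerate $g(v,\cdot)$, together with your explicit reading of the hypothesis as ``$z\mapsto g(v,z)$ is non-constant for a positive-measure set of $v$,'' closes that small gap.
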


\begin{proof}
For skill-only $f$, $f(v,z)=f(v,z')$ for all $z,z'$, so the squared difference is zero pointwise and the expectation is zero. If $g$ depends on $z$ on a set of non-zero measure, then the squared difference is strictly positive on that set, yielding a positive expectation.
\end{proof}

\begin{corollary}[Zero flip probability under non-skill edits]
\label{cor:flip}
Let $\hat{Y}_h(v,z)$ denote the decision of matcher $h$. For skill-only $f$,
\[
\Pr_{z,z'}\!\big[\hat{Y}_f(v,z)\neq \hat{Y}_f(v,z')\big]=0
\quad\text{for all } v .
\]
\end{corollary}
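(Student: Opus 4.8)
The plan is to derive Corollary~\ref{cor:flip} directly from Proposition~\ref{prop:blind} (attribute-blindness), which already establishes that every skill-only matcher is non-skill invariant. First I would recall that for a skill-only matcher $f$, the decision is $\hat{Y}_f(v,z)=\mathbb{1}\{f(v)\ge\tau\}$, which by construction does not reference $z$ at all. Fixing an arbitrary $v$ and taking any pair $z,z'$, non-skill invariance gives $\hat{Y}_f(v,z)=\hat{Y}_f(v,z')$ as a pointwise identity; hence the event $\{\hat{Y}_f(v,z)\neq\hat{Y}_f(v,z')\}$ is empty.

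Next I would push this pointwise equality through the probability: since the integrand (the indicator of a disagreement event) is identically zero for every realization of $(z,z')$, its expectation under any joint distribution on $(z,z')$ is zero, giving $\Pr_{z,z'}[\hat{Y}_f(v,z)\neq\hat{Y}_f(v,z')]=0$. Because $v$ was arbitrary, this holds for all $v$, which is exactly the claimed statement. Alternatively, one can note this is an immediate specialization of Proposition~\ref{prop:boi}: $\mathrm{BOI}(f)=0$ means the squared decision difference has zero expectation, and since that difference is a nonnegative integer-valued quantity (the squared difference of two $\{0,1\}$ indicators), zero expectation forces it to vanish almost surely, and in fact everywhere by the pointwise argument above.

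There is essentially no obstacle here: the corollary is a restatement of non-skill invariance at the level of the binary decision, and the only thing to be careful about is not over-claiming measure-theoretic subtleties — the equality is pointwise, so no ``almost everywhere'' hedging is needed and the conclusion holds for a completely arbitrary joint law on $(z,z')$. If I wanted to be maximally explicit, the single displayed chain would be
\begin{equation*}
\Pr_{z,z'}\!\big[\hat{Y}_f(v,z)\neq\hat{Y}_f(v,z')\big]
=\mathbb{E}_{z,z'}\big[\mathbb{1}\{\hat{Y}_f(v,z)\neq\hat{Y}_f(v,z')\}\big]
=\mathbb{E}_{z,z'}[0]=0,
\end{equation*}
with the middle equality justified by Proposition~\ref{prop:blind}. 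I would keep the written proof to two or three sentences, citing Proposition~\ref{prop:blind} (or equivalently Definition~\ref{def:boi} together with Proposition~\ref{prop:boi}) and remarking that the result is uniform in $v$.
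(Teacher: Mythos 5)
Your proposal is correct and follows essentially the same route as the paper, which also derives the corollary as an immediate consequence of the non-skill invariance established in Proposition~\ref{prop:blind}; your added remarks about the pointwise (rather than almost-everywhere) nature of the equality and the alternative derivation via Proposition~\ref{prop:boi} are consistent elaborations of that same argument.
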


\begin{proof}
Immediate from non-skill invariance (Proposition~\ref{prop:blind}).
\end{proof}
\begin{proposition}[Information-leakage bound]
\label{prop:mi}
Let $A$ denote any sensitive attribute not used by the matcher. With the Markov chain $A\!\to\! v\!\to\! s=f(v)$, the data-processing inequality gives
$I(A;s)\le I(A;v)$. In particular, if $v$ contains no information about $A$ (e.g., by construction or filtering), then $I(A;s)=0$.
\end{proposition}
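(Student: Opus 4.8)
The plan is to derive both claims directly from the data-processing inequality (DPI), after first verifying that the stated Markov structure genuinely holds. The only substantive point is the Markov property. I would argue that, because the matcher is skill-only (Definition), the score $s=f(v)$ is a deterministic function of $v$ alone and in particular does not read $A$; hence the conditional law of $s$ given $(A,v)$ coincides with the conditional law of $s$ given $v$ --- indeed it is the point mass at $f(v)$. This is precisely the statement $A \perp s \mid v$, i.e.\ that $A \to v \to s$ is a Markov chain. If one wishes to allow a randomized matcher instead, the same argument goes through verbatim provided the randomization is independent of $A$ given $v$.

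Next I would invoke the DPI in its standard form: for any Markov chain $X \to Y \to Z$ one has $I(X;Z) \le I(X;Y)$. Applying this with $X=A$, $Y=v$, $Z=s$ yields $I(A;s) \le I(A;v)$, the first claim. I would note that the inequality is taken at the level of the general (measure-theoretic) definition of mutual information, so it remains valid even though $v$ and $s$ are real-valued rather than discrete; no densities or finiteness assumptions are required, and the bound is non-vacuous since $I(A;v) \le H(A) < \infty$ whenever $A$ takes finitely many values.

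For the second claim I would combine the inequality with non-negativity of mutual information. If $v$ carries no information about $A$ --- that is, $A$ and $v$ are independent, so $I(A;v)=0$, as holds by construction or after the filtering step that strips demographic signal --- then $0 \le I(A;s) \le I(A;v) = 0$ forces $I(A;s)=0$, i.e.\ $A$ and $s$ are independent. I would close by flagging that this is a conditional guarantee: it transfers whatever attribute-independence the enclave pipeline establishes for $v$ to the disclosed score, but does not itself certify that $I(A;v)=0$; that premise must be supplied by the construction/filtering argument.

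As for the main obstacle: there is essentially none --- the result is a one-line consequence of DPI. The only places to be careful are making the Markov claim airtight (that $f$ has no dependence on $A$, plus any measurability bookkeeping for continuous alphabets and any matcher randomization) and stating explicitly that the zero-leakage conclusion is contingent on the unproven hypothesis $I(A;v)=0$.
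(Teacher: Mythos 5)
Your argument is correct and matches the paper's intended justification: the paper offers no separate proof for this proposition and simply invokes the data-processing inequality inside the statement, which is exactly the route you take. Your additional care in verifying the Markov property (that $s=f(v)$ is a deterministic function of $v$ alone, so $A\perp s\mid v$) and in flagging that $I(A;s)=0$ is conditional on the unproven hypothesis $I(A;v)=0$ only makes the implicit argument explicit.
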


\paragraph*{System implications}
The enclave and disclosure policy ensure that matching is executed on $v$ alone; personal and demographic fields $z$ are neither processed nor revealed. Propositions~\ref{prop:boi}–\ref{cor:flip} therefore hold by construction for the proposed verifier-side matching, demonstrating that screening-time sensitivity to non-skill heuristics is eliminated. Proposition~\ref{prop:mi} further shows that any residual information about sensitive attributes present in $v$ cannot increase through the matching function.

\paragraph*{Limitation}
These guarantees address bias introduced at screening time by non-skill fields. They do not correct upstream differences in the distribution of skills across populations; group-level parity depends on those distributions and is out of scope for this work.

\section {Security and Privacy Analysis}
\label{analysis}

The proposed LER framework ensures security and privacy, assuming the integrity of digital signatures, the security of verifiable credentials, and correct enclave attestation. Under these conditions, adversaries cannot forge credentials, transfer them between users, or access sensitive transcript data during NLP skill extraction, as all processing occurs within secure enclave memory. Holders maintain complete autonomy over credential disclosure, selectively revealing only necessary attributes (e.g., course name or skill level) to verifiers, consistent with self-sovereign identity principles. The NLP pipeline enhances credentials derived from raw sources (e.g., resumes, online transcripts) by identifying and embedding structured skill descriptors, allowing learners to protect sensitive personal information during credential verification.

We define formal security theorems for unforgettability and data confidentiality within the proposed LER framework as follows:
\begin{theorem} \textbf{Privacy.} If there exists a PPT adversary $\mathcal{A}$ that can break the confidentiality of transcript data $D$ processed within the secure enclave with non-negligible probability $\epsilon$, we construct a PPT challenger $\mathcal{C}$ that breaks the security of the secure enclave with the same probability $\epsilon$.

\end{theorem}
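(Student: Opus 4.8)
The plan is to argue by reduction, treating the secure enclave as a cryptographic black box whose confidentiality guarantee (non-observability of enclave memory and non-malleability of attested computation) is the hardness assumption we reduce to. Concretely, I would first fix the security game for enclave confidentiality: an enclave adversary interacts with a host-level interface, issues inputs to a measured enclave, observes all non-enclave memory, storage, scheduling, and I/O, and wins if it distinguishes which of two sealed plaintexts was loaded (or, equivalently, extracts any bit of the sealed input). Then I would define the transcript-confidentiality game for our system: the challenger picks $b\in\{0,1\}$, runs the NLP derivation pipeline inside the enclave on transcript $D_b$ (with the salted input commitment $H_{\text{inputs}}$, policy hash $H_{\text{policy}}$, nonce $n_V$, timestamp $t$, and the attested outputs $\sigma_{\text{TEE}}$, $H_{\text{prov}}$, and the resulting $VC_{\text{skill}}$ released to $\mathcal{A}$), and $\mathcal{A}$ must guess $b$ with advantage $\epsilon$.

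The reduction $\mathcal{C}$ would then simulate the entire LER environment for $\mathcal{A}$ while forwarding the transcript-dependent computation into its own enclave challenge oracle. The key steps, in order: (1) $\mathcal{C}$ receives the two challenge transcripts from $\mathcal{A}$ and submits them to the enclave-confidentiality challenger as the pair of candidate sealed inputs; (2) $\mathcal{C}$ obtains the external transcript of the enclave execution — the attestation $\sigma_{\text{TEE}}$, the salted commitment $H_{\text{inputs}}$, and all host-visible I/O — and relays it to $\mathcal{A}$ together with the self-issued credential and provenance digest; (3) since $H_{\text{inputs}}$ is salted with a fresh per-presentation salt $s$ and $H(\cdot)$ is collision-resistant (hence hides $D$ given the salt is not revealed), and since the only value derived from $D$ that leaves the enclave is the skill vector $\hat v$ embedded in the attested $VC$, the simulation is perfect from $\mathcal{A}$'s view provided the enclave's challenger loaded $D_b$; (4) $\mathcal{C}$ outputs whatever bit $\mathcal{A}$ outputs. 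If $\mathcal{A}$ guesses $b$ with probability $\tfrac12+\epsilon$, then $\mathcal{C}$ wins the enclave game with the same advantage, contradicting enclave security; hence $\epsilon$ is negligible.

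I would also need to handle a subtlety: the derived skill vector $\hat v$ itself is a deterministic function of $D$ and is released to the verifier, so strictly speaking confidentiality means \emph{no leakage of $D$ beyond what is implied by the disclosed attributes/predicates}, not information-theoretic secrecy of $D$. The clean way to state this is to have $\mathcal{A}$ submit two transcripts $D_0,D_1$ that are \emph{consistent with the same disclosed view} (same revealed attributes, same policy outcome), so that any distinguishing advantage must come from enclave-internal leakage rather than from the legitimate output. With that restriction, steps (1)–(4) go through unchanged and the reduction is tight at advantage $\epsilon$.

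The main obstacle I expect is making the "perfect simulation" claim in step (3) fully rigorous: $\mathcal{C}$ must produce, without knowing $b$, every host-observable artifact that $\mathcal{A}$ expects — including timing/scheduling patterns, memory-access traces to non-enclave regions, ciphertext sizes of sealed state, and the attestation signature — and argue these are independent of which transcript was loaded. This is exactly where the out-of-scope assumptions (no microarchitectural side channels, constant-time operations, input-independent memory footprint, length-hiding or fixed-length sealing) must be invoked explicitly; under those assumptions the external transcript is a fixed (or efficiently simulatable) distribution, and the reduction closes. I would therefore state those assumptions as hypotheses of the theorem and note that relaxing them reintroduces a distinguishing channel that the reduction cannot absorb.
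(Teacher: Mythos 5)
Your reduction follows essentially the same route as the paper's own proof: a left-or-right indistinguishability game in which the challenger relays two candidate transcripts to the enclave oracle, returns $VC_b$ to $\mathcal{A}$, and forwards $\mathcal{A}$'s guess, so a distinguishing advantage $\epsilon$ transfers directly to the enclave-confidentiality game. Where you go beyond the paper is in two places that actually matter. First, you observe that the released credential embeds the skill vector $\hat v = f(D_b)$, a deterministic function of the challenge document, so without restricting $D_0,D_1$ to be consistent with the same disclosed view the game is trivially winnable by inspecting the legitimate output rather than by breaking the enclave; the paper's sketch omits this restriction, and its closing claim that distinguishability ``indicates failure in the enclave's attestation mechanism'' does not hold without it. Second, you make explicit that the perfect-simulation step requires the host-observable transcript (timing, memory footprint, sealed-ciphertext lengths, attestation) to be independent of $b$, which is exactly where the paper's out-of-scope side-channel assumptions must be imported as hypotheses. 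Both refinements strengthen the argument rather than diverge from it; your version is the one that would survive a careful reviewer.
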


\begin{proof} $\mathcal{C}$ is given the public key and oracle access to secure enclaves $\mathcal{O}^{SE}$. It selects a document $D$ and submits it to $\mathcal{A}$ and receives a verifiable credential $VC$. Then, $\mathcal{C}$ prepares two different documents $D_0$ and $D_1$, randomly chooses $b \in \{0, 1\}$ and using $\mathcal{O}^{SE}$, receives ${VC}_b$ on the chosen $D_b$, which it provides to $\mathcal{A}$. If the adversary $\mathcal{A}$ successfully guesses the bit $b$, the $\mathcal{C}$ outputs the same bit $b$.

If $\mathcal{A}$ can distinguish between ${VC}_0$ and ${VC}_1$ with a non-negligible probability, it means there is information leakage in the output ${VC}_b$, indicating failure in the enclave's attestation mechanism, contradicting our security assumption.
\end{proof}
Our proposed framework offers a streamlined approach to capturing and verifying learners' achievements by uniting the different contents under a consistent workflow. Students gain fine-grained control over their records, and verifiers benefit from dependable, up-to-date skill data. 

\begin{theorem} \textbf{Unforgeability.}  Assume there is a probabilistic polynomial time (PPT) adversary $\mathcal{A}$ who can produce a valid credential on document $D$ with non-negligible probability $\epsilon$. We can build a challenger $\mathcal{C}$ to break at least one of the security assumptions, including unforgeability of the signature scheme, security of $VC$, or security of the enclaves attestation mechanism with the same probability $\epsilon$.
\end{theorem}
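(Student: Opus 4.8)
The plan is to prove unforgeability by a hybrid reduction that mirrors the structure of the Privacy theorem: assume a PPT adversary $\mathcal{A}$ that, with non-negligible probability $\epsilon$, outputs a credential on some document $D$ that passes the verifier's acceptance checks without having been produced by an honest issuer (for institutional credentials) or by the measured enclave (for derivative/self-issued credentials). I would then build a challenger $\mathcal{C}$ that runs $\mathcal{A}$, simulates the issuance and attestation oracles honestly, and — depending on which ingredient of the forged credential is ``fresh'' — converts $\mathcal{A}$'s output into a break of one of the three underlying assumptions. The acceptance predicate from Section~\ref{component} is the anchor: a credential is accepted only if (i) the issuer signature $\sigma_D$ or the holder signature on $\mathit{VP}$ verifies, (ii) the hardware attestation $\sigma_{\text{TEE}}$ over $(M_e\|H_{\text{inputs}}\|H_{\text{policy}}\|n_V\|t)$ verifies, (iii) $M_e$ is allowlisted, and (iv) the recomputed $H_{\text{prov}}$ and $H_{\text{policy}}$ match. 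A valid forgery must therefore subvert at least one of these, and the reduction case-splits accordingly.

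Concretely, I would define three disjoint events. In the first case, $\mathcal{A}$'s credential carries an issuer (or holder) signature on a message never queried to the signing oracle; here $\mathcal{C}$ embeds its EUF-CMA challenge public key as $vk_I$ (resp.\ as the holder key), answers all of $\mathcal{A}$'s legitimate issuance requests with the signing oracle, and outputs $\mathcal{A}$'s forged signature, breaking unforgeability of the signature scheme with probability $\epsilon$ (up to the case-probability factor). In the second case, the attestation $\sigma_{\text{TEE}}$ verifies for a tuple $(M_e,H_{\text{inputs}},H_{\text{policy}},n_V,t)$ that the enclave never emitted — for instance, an allowlisted $M_e$ bound to inputs or a policy digest that no honest enclave run produced on a freshly challenged nonce $n_V$; then $\mathcal{C}$ plays the attestation-unforgeability game of the TEE (its $\mathcal{O}^{SE}$ oracle, as in the Privacy proof), relays $\mathcal{A}$'s query-derivation requests, and outputs the forged attestation, contradicting correctness/unforgeability of remote attestation. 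In the third case, the signatures and attestation are all genuine but the bindings are inconsistent — $H_{\text{inputs}}$ or $H_{\text{prov}}$ or $H_{\text{policy}}$ equals an honestly computed digest while the actual $(D,\text{syllabus},\mathsf{pol})$ differ — which yields a collision in $H$, breaking collision resistance (equivalently, ``security of $VC$,'' since the VC's provenance binding rests on these hashes). A standard union-bound argument shows at least one case occurs with probability at least $\epsilon/3$, which remains non-negligible.

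The main obstacle I expect is making the case split genuinely exhaustive and the oracle simulation faithful. On the first point, one must argue that \emph{every} accepting-but-illegitimate credential falls into one of the three buckets; this requires pinning down precisely what ``legitimate'' means for self-issued credentials (the enclave will attest any syntactically well-formed NLP output, so the security content is not ``correct skills'' but ``bound to measured code and committed inputs''), and then checking that the acceptance predicate leaves no other seam — e.g.\ that the freshness window $\Delta$ and nonce $n_V$ rule out replay so that a stale-but-valid transcript is not itself a forgery. On the second point, $\mathcal{C}$ must answer $\mathcal{A}$'s adaptive issuance and derivation queries using only oracle access, which is routine for the signature oracle but needs care for the enclave oracle: $\mathcal{C}$ must be able to produce honest attestations on whatever $(H_{\text{inputs}},H_{\text{policy}})$ the game dictates while keeping $\mathcal{A}$'s view identical to the real protocol. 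Once these bookkeeping points are settled, each branch is a textbook reduction; accordingly I would present the argument at the level of a proof sketch, stating the case split, the embedding in each case, and the $\epsilon/3$ loss, consistent with the paper's stated scope of ``formal security statements and proof sketches.''

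\begin{proof}[Proof sketch]
Suppose $\mathcal{A}$ outputs, with probability $\epsilon$, a credential on a document $D$ that the verifier accepts yet was not produced by an honest issuer (institutional case) or by the measured enclave on committed inputs (derivative case). Acceptance requires that issuer/holder signatures verify, that $\sigma_{\text{TEE}}$ verifies over $(M_e\|H_{\text{inputs}}\|H_{\text{policy}}\|n_V\|t)$ with allowlisted $M_e$, and that recomputed $H_{\text{prov}},H_{\text{policy}}$ match. Partition the forgery event into: (1) a valid issuer/holder signature on a never-signed message; (2) a valid $\sigma_{\text{TEE}}$ on a tuple no honest enclave emitted for the challenged nonce; (3) all signatures and the attestation genuine but some input/policy/provenance digest collides with an honest one under distinct preimages. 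A challenger $\mathcal{C}$ embeds its EUF-CMA key as $vk_I$ (or the holder key) in case (1), its TEE attestation challenge in case (2), and its hash-collision challenge in case (3); in all cases $\mathcal{C}$ simulates $\mathcal{A}$'s issuance and derivation queries using the corresponding oracle, so $\mathcal{A}$'s view is identical to the real protocol, and $\mathcal{C}$ forwards $\mathcal{A}$'s output to win its game. By a union bound one case occurs with probability at least $\epsilon/3$, contradicting the assumed security of the signature scheme, the enclave attestation mechanism, or the collision-resistant hash underpinning $VC$ provenance. Hence no such $\mathcal{A}$ exists.
\end{proof}
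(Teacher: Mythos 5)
Your proposal is correct, and it takes a genuinely broader route than the paper's own argument. The paper gives a single reduction to EUF-CMA security of the issuer's signature scheme: the challenger relays $\mathcal{A}$'s signing queries to $\mathcal{O}^{Sig}$, and when $\mathcal{A}$ outputs a forged $VC'$ on an unqueried $D'$, it argues that because the enclave verifies the issuer signature on $D'$ before deriving any credential, a valid $VC'$ must embed a valid signature $\sigma'$ on $D'$, which the challenger extracts and outputs as its signature forgery (achieving advantage $\epsilon$ with no union-bound loss). The attestation mechanism and the hash/provenance bindings are thus treated as axioms inside that one reduction rather than as separate reduction targets. Your three-way case split (signature forgery, attestation forgery, hash collision on $H_{\text{inputs}}/H_{\text{prov}}/H_{\text{policy}}$) with the $\epsilon/3$ loss is more faithful to the theorem statement itself, which explicitly lists all three assumptions as possible break points, and it makes explicit the exhaustiveness argument that the paper's proof leaves implicit (namely, that an accepting-but-illegitimate credential must subvert at least one acceptance check). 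What the paper's version buys is brevity and a tight $\epsilon$; what yours buys is a complete accounting of the trust chain, including the derivative/self-issued path where the issuer signature alone does not suffice and the attestation and provenance digests carry the security weight. Both are valid at the proof-sketch level the paper targets.
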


\begin{proof} The challenger $\mathcal{C}$ is given the public key of the issuer and has access to the signing oracle $\mathcal{O}^{Sig}$. The $\mathcal{C}$ forwards the public key to $\mathcal{A}$. When $\mathcal{A}$ requests a signature on document $D_i$, the $\mathcal{C}$ forwards this request to $\mathcal{O}^{Sig}$ and returns the signature $\sigma_i$ to $\mathcal{A}$. Finally, $\mathcal{A}$ outputs a forged credential $VC^\prime$ for document $D^\prime$. If $\mathcal{C}$ is satisfied, i.e., $\mathrm{VCVrf}(vk, D^\prime, y^\prime, \pi^\prime) = 1$, where $(y^\prime, \pi^\prime)$ are extracted from $VC^\prime$, on an authorized document $D^\prime$ that was not previously queried to the signing oracle, it means $\mathcal{A}$ could successfully forge the LER credential.
 
Since any signature generated on $D^\prime$ must be verified before processing by the secure enclave that generated ${VC}^\prime$, a valid ${VC}^\prime$ must be obtained from a valid signature $\sigma^\prime$ on $D^\prime$. Hence, $\mathcal{C}$ breaks the unforgeability of the digital signature, where it can extract this signature and output $(D^\prime, \sigma^\prime)$ as its forgery, contradicting our security assumption.
\end{proof}

\section{Experimental Evaluation}
\label{evaluation}

This section describes the experimental setup and the evaluation protocol used to assess effectiveness and system performance.

\subsection{Experimental Setup}
The NLP skill–extraction pipeline was implemented in Python and containerized with Docker for reproducibility. Experiments were executed on a local workstation (Intel Core i7, 16\,GiB RAM, Ubuntu~22.04) and on an AWS Nitro Enclave configured with comparable resources. Inputs (transcripts and r\'esum\'es) were processed in batches of 5--100 items, both sequentially and with controlled parallelism. For cloud runs, system counters and wall-clock times were exported to Amazon CloudWatch. All enclave runs used a single measured binary; raw artifacts and private keys remained enclave-confined.

\subsection{Artifacts and Tasks}
A representative student skill profile was derived from computer-science syllabi collected from an anonymous R1 university. Two public job descriptions were selected: (i) a Java-developer role and (ii) a C\#/data-mining role. Job texts were normalized (tokenization, sentence segmentation, stop-word filtering) and embedded using the same model family as the syllabus-to-O*NET mapping to maintain embedding-space compatibility.

\subsection{Metrics}
\paragraph*{Skill-matching effectiveness.}
(i) \emph{Binary overlap} between attested skills and required skills:
\[
\mathrm{Overlap@}k
 \;=\;
 \frac{|\mathrm{Top}\,k(v)\cap R|}{|R|},
\]
where $v$ is the attested skill vector, $\mathrm{Top}\,k(v)$ the top-$k$ skills by score, and $R$ the set of skills extracted from the job description.
(ii) \emph{Semantic similarity} as the mean maximum cosine similarity between job skills and candidate skills:
\[
\mathrm{SemSim}
 \;=\;
 \frac{1}{|R|}\sum_{s\in R}\max_{s'\in S}\cos\!\big(\vec v_s,\vec v_{s'}\big),
\]
with $S$ the candidate’s skill set and $\vec v_\cdot$ the embeddings.

\paragraph*{System performance.}
Metrics include: (i) \emph{per-document latency} for ingestion and extraction ($p50/p95$), 
(ii) end-to-end \emph{presentation–verify–match} latency ($p50/p95$) within the verifier enclave, 
(iii) \emph{CPU time} and \emph{peak RAM} for enclave and non-enclave segments, and 
(iv) \emph{I/O overhead} for status-stapling and attestation verification. 
Nonparametric bootstrapping is applied to compute confidence intervals.

\subsection{Evaluation Methodology}
Tests were run in both environments on transcript datasets (1{,}000--3{,}000 words each).
Two dimensions were varied:
\begin{enumerate}
  \item \textbf{Batch size:} 5, 10, 20, 30, and 40 items (Fig.~\ref{fig:execution-time-small}), and 50, 60, 70, 90, and 100 items (Fig.~\ref{fig:execution-time-large}).
  \item \textbf{Container and host utilization:} Docker CPU (\%), memory (MiB), active container processes, and host-level utilization.
\end{enumerate}
Each experiment was repeated three times and averaged to reduce noise. Pipeline configuration, the embedding model, and pre-processing were held constant to isolate infrastructure effects.

\smallskip
\noindent\textbf{Performance and scalability.}
On a local workstation (Intel Core i7, 16\,GiB RAM), the matching stage consistently executed in approximately $0.1$\,s per job description, utilized under $10\%$ of a single CPU core, and consumed under 100\,MiB of memory, indicating a lightweight implementation suitable for real-time use. In the system design, matching is executed inside the verifier enclave; only the decision or a minimal score is disclosed according to policy.

\subsection{Skill Extraction Accuracy and External Validation}
\label{sec:accuracy}
The NLP mapping follows the Syllabus--to--O*NET method of Course--Skill Atlas. That study reports: (i) large-scale domain coverage (3.16\,M syllabi across 62 fields of study); (ii) content filtering that retains only pedagogical sentences (approximately 14\% of raw sentences after cleaning); (iii) stability under sub-sampling with an ``elbow'' near nine syllabi per institution$\times$major$\times$year; and (iv) auxiliary ability-prediction regressions with mean squared error below 0.025 for most mappings. As gold labels for sentence$\to$DWA (Detailed Work Activity) assignments do not exist at scale, these macro validations are used in lieu of a gold-label evaluation.

In the prototype, the same pipeline and similarity scoring are executed inside the enclave. A repeatability check on the local corpus, under identical inputs and configuration, yielded $<\!5\%$ variance among the top-ranked skills across repeated runs, indicating stable derivations under fixed conditions. No new gold-label annotation is introduced; instead, the externally validated methodology and the observed stability are relied upon to justify mapping quality for the intended systems use.

\noindent\textbf{Scope and limitations.}
Inputs predominantly comprise computer-science syllabi and representative informal artifacts; extending coverage across additional disciplines and assessing robustness to noisy or partial documents are left to future work. Results should therefore be interpreted as evidence of feasibility under the stated setting rather than a cross-discipline benchmark.

\noindent \textbf{Operational performance.}
The end-to-end path combines (i) presentation and nonce exchange, (ii) attestation verification, and (iii) skills-only matching. The dominant compute is the NLP pipeline. The matching stage was observed at approximately $0.1$\,s per job description, and attestation verification time was dominated by signature checks; no accumulators beyond signature verification were required.

\subsection{Skill-Matching Methodology}
\noindent\textbf{Binary overlap.}
Direct overlap was measured between the candidate’s extracted skill set $S_{\text{student}}$ and the skills explicitly required in the job description $S_{\text{job}}$:
\begin{equation}
\mathrm{Overlap} \;=\; \frac{\lvert S_{\text{student}} \cap S_{\text{job}}\rvert}{\lvert S_{\text{job}}\rvert}.
\end{equation}
For the Java–developer role, an $80\%$ overlap (8/10) was observed, covering \emph{Java}, object–oriented programming, data structures, algorithms, SQL, software engineering, version control (Git), and operating systems. For the C\# data–mining role, a $70\%$ overlap (7/10) was observed, covering object–oriented programming, data structures, algorithms, SQL, data mining, machine learning, and Python. Lower overlaps typically reflect specialized, industry–specific competencies (e.g., .NET frameworks).

\smallskip
\noindent\textbf{Hybrid semantic matching.}
To account for near matches and synonyms, semantic similarity was incorporated using vector embeddings. Cosine similarity between two skill embeddings $\mathbf{v}_i$ and $\mathbf{v}_j$ is
\begin{equation}
\mathrm{cos}(\mathbf{v}_i,\mathbf{v}_j) \;=\; \frac{\mathbf{v}_i \cdot \mathbf{v}_j}{\lVert \mathbf{v}_i\rVert \,\lVert \mathbf{v}_j\rVert }.
\end{equation}
Each job-required skill was compared against all candidate skills, retaining the maximum similarity per job skill.
For the Java–developer role, the strongest matches (similarity) included: Java (1.00), Data Structures (1.00), Algorithms (1.00), Object–Oriented Programming (0.99), SQL (0.99), Software Engineering (0.98), Operating Systems (0.95), Version Control (Git) (0.92), Python (0.85), and C++ (0.80).
For the C\# data–mining role, the top matches included: Object–Oriented Programming (1.00), Data Structures (1.00), Algorithms (1.00), SQL (1.00), Data Mining (1.00), Machine Learning (1.00), and Python (1.00), with related skills such as Java (0.88), C++ (0.85), and Software Engineering (0.80) also identified.

\section{Results}
Tables~\ref{tab:system-metrics-comparison} compare system metrics for typical batch runs on a local machine versus an AWS Nitro Enclave. On the local workstation, Docker CPU usage exceeds \(1000\%\), indicating extensive multi-core utilization; memory consumption averages \(\approx 1551\)\,MB, and the host load average settles near \(8.91\). By contrast, the AWS Nitro Enclave maintains Docker CPU usage near \(102\%\), with overall host CPU usage at \(51.81\%\). Although Docker memory usage decreases slightly to \(\approx 1147\)\,MB, total host memory rises to \(\approx 9529\)\,MB out of \(16\)\,GiB available. The AWS load average remains near \(1.03\), suggesting capacity headroom for additional workloads. Collectively, these snapshots indicate that the local system approaches saturation under sustained CPU load, whereas AWS maintains moderate utilization when running the same NLP container.

\begin{table}[htbp]
    \centering
    \renewcommand{\arraystretch}{1.2}
    \caption{Comparison of system metrics during NLP processing.}
    \label{tab:system-metrics-comparison}
    \vspace{0.5em}
    \begin{tabular}{|l|l|l|}
        \hline
        \textbf{Metric} & \textbf{Local Machine} & \textbf{AWS Nitro Enclave} \\ \hline
        Docker CPU (\%)         & 1072.87 & 102.13  \\ \hline
        Docker Memory (MB)      & 1551.21 & 1146.89 \\ \hline
        Docker Processes        & 37      & 6       \\ \hline
        Host CPU Usage (\%)     & 90.98   & 51.81   \\ \hline
        Host Memory Used (MB)   & 1271    & 9529    \\ \hline
        Host Load Average       & 8.91    & 1.03    \\ \hline
    \end{tabular}
\end{table}

For execution time and scalability, Figs.~\ref{fig:execution-time-small} and~\ref{fig:execution-time-large} compare completion times across input sizes.

\begin{itemize}
    \item \textbf{Small/Moderate batches (5--40 files).}
    AWS completes inference in roughly 9--10\,s at the lower range (10-file batches), whereas the local machine ranges between 10--15\,s. As file counts increase, AWS times remain comparatively stable, reaching \(\sim 23\)\,s at 40 files, while the local host exceeds 25\,s.
    \item \textbf{Larger batches (50--100 files).}
    Beyond 50 files, gaps widen: the local setup reaches \(\sim 40\)\,s at 50 files and peaks above 70\,s at 90--100 files. AWS processes 50 files in \(<30\)\,s and scales to \(\sim 50\)\,s for 100 files, reflecting a more efficient response curve.
\end{itemize}

\noindent\textbf{Analysis.}
AWS Nitro Enclaves consistently outperform the local setup for NLP-based skill extraction under comparable CPU specifications. Virtualization and scheduling characteristics, together with I/O behavior, yield faster runtimes and lower load averages. The local machine’s sustained CPU saturation elevates the load average, limiting parallel-processing headroom. In contrast, AWS operates at moderate CPU usage and exhibits headroom to scale with minimal performance loss. Overall, pairing the containerized NLP pipeline with enclave execution is effective; local infrastructure remains suitable for small/development workloads, whereas the AWS Nitro Enclave provides a more robust foundation for high-volume tasks.

\begin{figure}[htbp]
    \centering
    \includegraphics[width=1\linewidth]{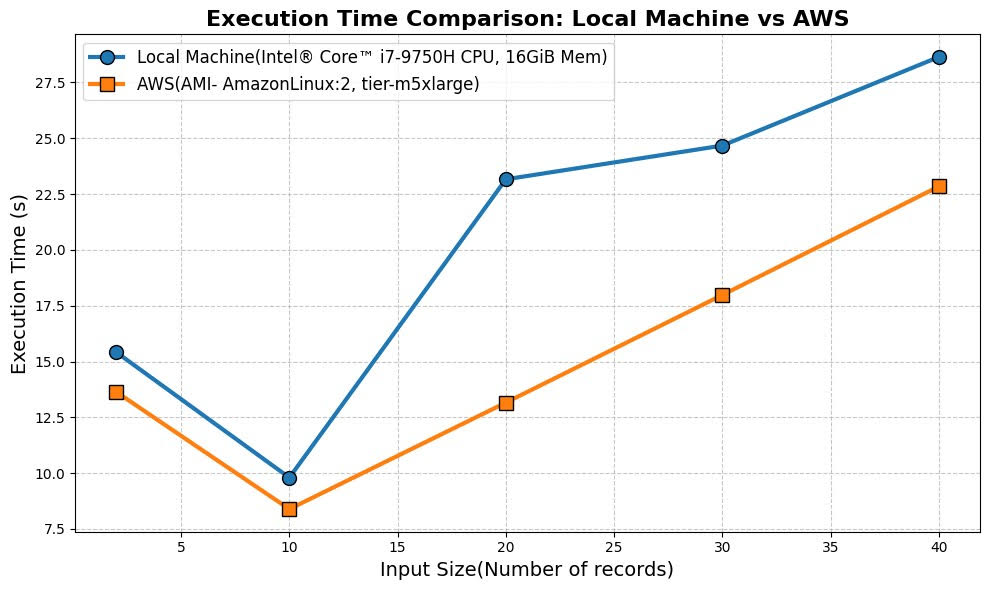}
    \caption{Execution time for small/moderate batches (5--40 files): local machine vs.\ AWS Nitro Enclave.}
    \label{fig:execution-time-small}
\end{figure}

\begin{figure}[htbp]
    \centering
    \includegraphics[width=1\linewidth]{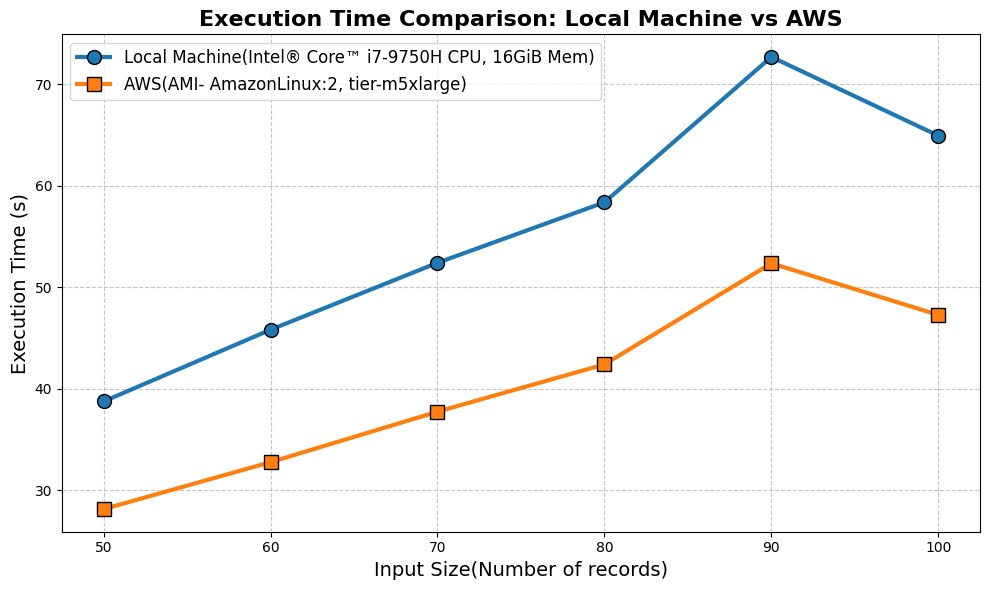}
    \caption{Execution time for larger batches (50--100 files): local machine vs.\ AWS Nitro Enclave.}
    \label{fig:execution-time-large}
\end{figure}

\noindent\textbf{Validation posture.}
The mapping follows the validated Syllabus--to--O*NET methodology; consequently, external validation evidence from prior work is inherited. In this prototype, a stability check across repeated runs observed \(<5\%\) variance among top-ranked skills, indicating consistency of the derived vectors under identical inputs and configuration.

\section{Discussion}
\label{discussion}
Zero-knowledge proofs (ZKPs) provide an alternative privacy mechanism to TEE for LER workflows. With ZKPs, verifiers can be convinced that a predicate holds (e.g., grade-point average exceeds a threshold) without access to underlying transcript data, thereby removing hardware trust assumptions and enabling strong cryptographic privacy guarantees. The trade-off is computational overhead and proving-time latency that may be nontrivial for interactive scenarios. TEEs, by contrast, offer low-latency execution and straightforward integration with existing credential flows but introduce dependence on hardware roots of trust and residual exposure to microarchitectural side channels. A pragmatic path is a hybrid: TEEs are used for high-throughput, real-time operations (ingestion, normalization, skills-only matching), while ZKPs are applied selectively to privacy-critical predicates or cross-domain verifications. Such a design preserves practicality while strengthening end-to-end privacy.

As the evidence set broadens to include job postings and informal artifacts, maintaining NLP accuracy and stability becomes central. Transformer-based methods exhibit strong semantic coverage yet require domain adaptation as terminology evolves. A human-in-the-loop feedback channel—where learners, educators, and employers validate extracted skills, flag misclassifications, and approve corrections—can be used to refine model prompts, filters, and thresholds over time. Lightweight interfaces for rating/tagging and audited correction logs support continuous improvement without relaxing privacy guarantees, because raw artifacts remain enclave-confined and only attested updates are released.

Operational controls reinforce compliance-oriented principles. Selective disclosure enforces data minimization and purpose limitation; consent is obtained at upload; and raw records and keys remain enclave-confined. Presentations are short-lived and must carry fresh revocation status within a window $\Delta$; derivative credentials embed provenance (inputs, code measurement, timestamp) so that disputes trigger re-issue and prior credentials are revoked but auditable. These controls align with common data-protection tenets (consent, minimization, freshness, retention), while jurisdiction-specific requirements (e.g., FERPA/GDPR scoping and retention periods) are left to deployment-time policy and legal review.

\noindent\textbf{Scope and Limitations.} Several limitations remain. External validity is constrained by the evaluation corpus: robustness to noisy or partial documents outside computer science was not benchmarked here and will require discipline-specific domain adaptation and validation. The guarantees described mitigate bias introduced at screening time by non-skill fields; upstream disparities in skill acquisition are not addressed by the matcher itself. Side-channel hardening beyond constant-time operations and minimal state exposure was not modeled formally; adoption of published mitigations and routine measurement/patch cycles is recommended for production deployments.

\noindent\textbf{External validity beyond computer science.}
Although the empirical evaluation used a computer-science corpus, the pipeline is taxonomy-agnostic and maps learning outcomes to O*NET skills with documented coverage across numerous fields of study. Portability to additional majors is therefore expected, provided (i) sufficient syllabus/learning-outcome availability and (ii) routine domain adaptation of filters and thresholds. In future work, discipline-specific validation (faculty spot-checks and sub-sampling stability analyses) will be repeated for non-CS fields to establish cross-domain reliability.

\section{Conclusion}
\label{conc}

In this paper, we introduced an AI-enabled, decentralized LER framework integrating trusted execution environments, decentralized identifiers, and NLP-based skill extraction has been presented. By incorporating both institutional records and informal learning artifacts, the framework enables holder-controlled selective disclosure and skills-only matching, thereby reducing opportunities for screening bias while preserving confidentiality. Experimental evidence on representative workloads indicates that offloading NLP processing to a cloud enclave improves performance and scalability, supporting the practicality of enclave-backed deployments.

Future extensions include broader domain coverage and robustness assessments beyond computer science, human-in-the-loop feedback to refine extraction and thresholds, and parsing of job postings for tighter alignment with market demand. A hybrid design that combines enclaves for real-time operations with zero-knowledge proofs for selective, privacy-critical predicates is expected to strengthen privacy without sacrificing usability. Taken together, secure computation, standardized skill mapping, and self-sovereign control position the proposed LER framework as a flexible, privacy-preserving foundation for modern education and workforce alignment.

\section*{Acknowledgment}
This work was supported in part by the Walmart Foundation. 

\bibliographystyle{IEEEtran}
\bibliography{ref}

@article{javadian2024course,
  title={Course-Skill Atlas: A national longitudinal dataset of skills taught in US higher education curricula},
  author={Javadian Sabet, Alireza and Bana, Sarah H and Yu, Renzhe and Frank, Morgan R},
  journal={Scientific Data},
  volume={11},
  number={1},
  pages={1086},
  year={2024},
  publisher={Nature Publishing Group UK London}
}

@misc{vc_standard,
  author    = {W3C},
  title     = {Verifiable Credentials Data Model 1.0: Expressing Verifiable Information on the Web},
  year      = {2019},
  howpublished = {W3C Recommendation},
  url       = {https://www.w3.org/TR/vc-data-model/},
}

@misc{hyperledger_indy,
  author    = {The Linux Foundation},
  title     = {Hyperledger Indy},
  year      = {2023},
  howpublished = {Available at: \url{https://www.hyperledger.org/use/hyperledger-indy}},
}

@misc{open_wallet,
  author    = {Open Wallet Foundation},
  title     = {Open Wallet Foundation},
  year      = {2023},
  howpublished = {Available at: \url{https://openwallet.foundation}},
}

@misc{DIDs,
    title = {Decentralized Identifiers (DIDs)},
    howpublished = {\url{https://www.w3.org/TR/did-core/}},
    note = {Accessed: 2025-04-19}
}

@article{dcc_framework,
  author    = {Digital Credentials Consortium},
  title     = {Decentralized Credentialing Framework for Universities},
  year      = {2020},
  journal   = {MIT},
  url       = {https://digitalcredentials.mit.edu/},
}

@article{costan2016sgx,
  author = {Costan, Victor and Devadas, Srinivas},
  title = {Intel SGX Explained},
  journal = {IACR Cryptology ePrint Archive},
  volume = {2016},
  number = {86},
  pages = {1-118},
  year = {2016},
  url = {https://eprint.iacr.org/2016/086.pdf}
}

@misc{mooc2025,
  author       = {MOOC.org},
  title        = {What is a MOOC?},
  year         = {2025},
  howpublished = {\url{https://www.mooc.org/}},
  note         = {Accessed: 20-Feb-2025}
}

@article{steiu2020blockchain,
  title={Blockchain in education: Opportunities, applications, and challenges},
  author={Steiu, Mara-Florina},
  journal={First Monday},
  volume={25},
  number={9},
  pages={66},
  year={2020},
  publisher={Paperpile}
}

@article{permana2024potential,
  title={Potential and challenges of blockchain technology implementation in Higher Education: A systematic literature review},
  author={Permana, Gusi Putu Lestara and others},
  journal={Review of Management, Accounting, and Business Studies},
  volume={2024},
  year={2024},
  publisher={Paperpile}
}

@article{mohammad2022barriers,
  title={Barriers Affecting Higher Education Institutions' Adoption of Blockchain Technology: A Qualitative Study},
  author={Mohammad, A. and others},
  journal={Informatics},
  volume={2022},
  year={2022},
  publisher={Paperpile}
}

@article{elkhodr2024systematic,
  title={A Systematic Review and Multifaceted Analysis of the Integration of Artificial Intelligence and Blockchain: Shaping the Future of Australian Higher Education},
  author={Elkhodr, Mahmoud and others},
  journal={Future Internet},
  volume={2024},
  year={2024},
  publisher={Paperpile}
}

@article{haugsbakken2019blockchain,
  title={The Blockchain Challenge for Higher Education Institutions},
  author={Haugsbakken, Halvdan and others},
  journal={European Journal of Education},
  volume={54},
  pages={41--46},
  year={2019},
  publisher={Paperpile}
}

@article{wang2025transforming,
  title={Transforming Education Through Blockchain: A Systematic Review of Applications, Projects, and Challenges},
  author={Wang, Xianxian and others},
  journal={IEEE Access},
  volume={2025},
  year={2025},
  publisher={Paperpile}
}

@article{bhaskar2020blockchain,
  title={Blockchain in education management: present and future applications},
  author={Bhaskar, P. and others},
  journal={Interactive Technology and Smart Education},
  volume={17},
  number={3},
  pages={242--258},
  year={2020},
  publisher={Paperpile}
}

@misc{legicred_digital_transformation,
  author = {{Intellinez Systems}},
  title = {{LegiCred: Blockchain-Powered Education Credential Issue \& Verification Platform}},
  year = {2023},
  howpublished = {\url{https://www.legicred.com/wp-content/uploads/2023/05/LegiCred-Brochure.pdf}},
  note = {Accessed: May 1, 2025}
}

@misc{accredible2023future,
  author       = {Accredible},
  title        = {The Future of Credentialing},
  year         = {2023},
  url          = {https://www.accredible.com/blog/the-future-of-credentialing},
  note         = {Accessed: 2025-06-11}
}

@techreport{opencreds_framework,
  author       = {{OpenLearning Limited}},
  title        = {OpenCreds: Australia’s First Cross‑Sector Micro‑Credential Framework (v2.0)},
  institution  = {OpenLearning},
  type         = {Tech.\ Report},
  number       = {Version 2.0},
  address      = {Surry Hills, NSW, Australia},
  month        = jul,
  year         = {2020},
  note         = {Released 7 July 2020; aligns with Australia Qualifications Framework recommendations},
  url          = {https://solutions.openlearning.com/opencreds},
}

@misc{ams2024fraud,
  author = {{American Management Services}},
  title = {Credential Fraud: 2024 Hiring Risk Survey Report},
  year = {2024},
  howpublished = {\url{https://www.ams.com/resources/credential-fraud-report-2024}},
  note = {Accessed: 2025-06-23}
}

@article{peetz2025powerschool,
  author       = {Peetz, Caitlynn},
  title        = {What Schools Should Know About the PowerSchool Data Breach},
  journal      = {Education Week},
  date         = {2025-01-09},
  url          = {https://www.edweek.org/technology/what-schools-should-know-about-the-powerschool-data-breach/2025/01},
  note         = {Accessed 2025-06-23}
}

@techreport{shrm2023skills,
  author       = {{Society for Human Resource Management (SHRM)}},
  title        = {Transforming HR: The Rise of Skills-Based Hiring and Retention Strategies},
  institution  = {Society for Human Resource Management},
  type         = {WorkplaceTech Spotlight},
  year         = {2023},
  month        = dec,
  url          = {https://www.shrm.org/labs/resources/transforming-hr-the-rise-of-skills-based-hiring-and-retention-strategies},
  note         = {Accessed: 2025-06-23}
}

@misc{onet2025taxonomy,
  author = {{National Center for O*NET Development}},
  title = {O*NET OnLine},
  year = {2025},
  url = {https://www.onetonline.org/},
  note = {Accessed: 2025-06-26}
}

@article{reimers2019sentencebert,
  title={Sentence-BERT: Sentence embeddings using Siamese BERT-networks},
  author={Reimers, Nils and Gurevych, Iryna},
  journal={arXiv preprint arXiv:1908.10084},
  year={2019},
  url={https://arxiv.org/abs/1908.10084}
}

@misc{allmpnetbasev2,
  title={{all-mpnet-base-v2}: Sentence-Transformers model},
  author={{Reimers, Nils and Gurevych, Iryna and Sentence-Transformers Team}},
  year={2021},
  note={Available at \url{https://huggingface.co/sentence-transformers/all-mpnet-base-v2}}
}

\end{document}